\newtheorem{remark}{Remark}
\newtheorem{prop}{Proposition}
\theoremstyle{definition}
\newtheorem{definition}{Definition}
\begin{document}
\title{Joint Security-vs-QoS Game Theoretical Optimization for Intrusion Response Mechanisms for Future Network Systems}
%
%
%

 \author{Arash~Bozorgchenani,~\IEEEmembership{Member,~IEEE,}
        Charilaos~C. Zarakovitis,~\IEEEmembership{Member,~IEEE,}
        Su~Fong Chien,
        Qiang~Ni,~\IEEEmembership{Senior~Member,~IEEE,}   Antonios~Gouglidis,
        Wissam~Mallouli,
        Heng~Siong Lim,~\IEEEmembership{Senior~Member,~IEEE}
\IEEEcompsocitemizethanks{ 
 \IEEEcompsocthanksitem Arash~Bozorgchenani, Qiang~Ni and Antonios~Gouglidis are with the School of Computing and Communications, Lancaster University, UK, email:\{a.bozorgchenani, q.ni, a.gouglidis\}@lancaster.ac.uk.%

 \IEEEcompsocthanksitem Charilaos C. Zarakovitis is with National Center For Scientific Research "Demokritos", Greece, e-mail:c.zarakovitis@iit.demokritos.gr.%
 \IEEEcompsocthanksitem Su Fong Chien is with Advanced Intelligence Lab MIMOS Berhad Jalan Inovasi 3, TPM, 57000 Kuala Lumpur, Malaysia, e-mail:sf.chien@mimos.my. 
 \IEEEcompsocthanksitem Wissam Mallouli is with Montimage EURL, France, e-mail:  wissam.mallouli@montimage.com.%
 \IEEEcompsocthanksitem Heng Siong Lim is with Faculty of Engineering and Technology, Multimedia University, Jalan Ayer Keroh Lama, 75450 Melaka, Malaysia, e-mail:  hslim@mmu.edu.my. }%
 \thanks{This research is supported by the H2020 SANCUS project under agreement number 952672. More details can be found at: https://www.sancus-project.eu.}
 }

\IEEEtitleabstractindextext{
\begin{abstract}
Network connectivity exposes the network infrastructure and assets to vulnerabilities that attackers can exploit. Protecting network assets against attacks requires the application of security countermeasures. Nevertheless, employing countermeasures incurs costs, such as monetary costs, along with time and energy to prepare and deploy the countermeasures. Thus, an Intrusion Response System (IRS) shall consider security and QoS costs when dynamically selecting the countermeasures to address the detected attacks. This has motivated us to formulate a joint Security-vs-QoS optimization problem to select the best countermeasures in an IRS. The problem is then transformed into a matching game-theoretical model. Considering the monetary costs and attack coverage constraints, we first derive the theoretical upper bound for the problem and later propose stable matching-based solutions to address the trade-off. The performance of the proposed solution, considering different settings, is validated over a series of simulations.
\end{abstract}

\begin{IEEEkeywords}
countermeasure selection, security, quality of service, optimization, matching game, intrusion response mechanisms, network systems.
\end{IEEEkeywords} }

\maketitle

\section{Introduction}
\IEEEPARstart{B}{y} blending different types of technologies and advances, 5G offers various types of services such as smart home, vehicular communication, smart parking, air-ground integrated communication, fog/edge computing, industry 4.0, and blockchain-based services to name some \cite{SurveySec18}. Even though the new technologies pave the way for a fully connected people and things era by enabling many 5G services with various demands such as eMBB, mMTC, and uRLLC, they introduce new security challenges too \cite{BozorgchenaniARES22}. On one hand, this includes the utilization of 5G enabling technologies such as software-defined networking, network function virtualization, mobile edge computing, network slicing, etc. On the other hand, the heterogeneity of the 5G network brings new security challenges too, including the internet of things and end-user devices, service requests, new stakeholders and mission-critical applications, etc. \cite{SurveyTutorial20}. Needless to say that the pre-5G security threats still need to be addressed as well.



Cyberattacks target the network infrastructure to undermine the services' availability, and information confidentiality and integrity. The continuous rise in the number and complexity of attacks made it difficult to keep track of the large number of alerts generated by Intrusion Detection Systems (IDSs) and made security teams worldwide seek effective remediation tools. Potential tools to counteract ongoing threats are the Intrusion Reaction Systems (IRSs), which are capable of reacting against suspicious activities in real or near real-time by continuously monitoring the IDS alerts \cite{IRR_14}.
These reactions in a 5G network can include any of the atomic countermeasures of notifying the network operator, notifying the vendor, filtering the traffic, re-launching a node, re-configuring a virtual network function, replacing one node with another, providing a patch to prevent/remedy the identified attacks, and etc. The effectiveness of different countermeasures can be evaluated by their ability to mitigate the risk the assets of the network are exposed. One solution is applying a combination of different atomic countermeasures to address the affected nodes. On the one hand, from the orchestrator/provider's side, it is essential to address as many detected attacks as possible to minimize the impact of the threats in the network. On the other hand, applying countermeasures can have some potential consequences in terms of costs too. The remediation actions affect the system's QoS requirements such as time, energy, and monetary costs required to prepare and deploy the countermeasures, to name a few.

Thus, there exists a trade-off between maximizing the network security level and minimizing the QoS costs. Thereby, we formulate a joint security-vs-QoS countermeasure selection problem to optimize the Intrusion Response Mechanisms (IRMs) in network systems. To address this problem, we propose a game-theoretical approach to select the best set of countermeasures to be taken. Such a selection needs to balance the inherent trade-off between the effectiveness of the risk mitigation policy and its potential negative QoS impact. Such a balance is performed by the security administrator that has to maintain an adequate level of protection with a limited budget. Our {contribution}\footnote{This research has been conducted as part of the H2020 SANCUS project whose architecture and engines are elaborated in \cite{SANCUS_ARES21}.} can be summarized as follows:
\begin{enumerate}
    \item Formulating a novel joint security-vs-QoS problem for the optimal selection of countermeasures considering time, energy, and monetary cost as the QoS factors. To the best of our knowledge, the problem formulation has not been attempted by the relevant studies;
    \item Transforming the problem into a single objective problem by an $\epsilon$-constraint method and reformulating the problem in a two-sided matching game;
    \item Designing a Hospital/Resident (HR) model for the problem and proposing two Stable Matching (SM) countermeasure-oriented and attack-oriented algorithms to solve the problem from the security and QoS perspectives, which are unique.
    \item Deriving the upper bound for the problem by employing the decomposition and Branch-and-Bound (BB) techniques;
    \item Performing extensive simulation experiments to demonstrate the impact of different parameters on the performance of the two proposed  algorithms.
\end{enumerate}
The rest of the paper is organized as follows. In Section~\ref{sec:ReWorks}, we review the state of the art. In Section~\ref{sec:ProbFor} the model is described. Section~\ref{sec:solution} first shows the reformulated problem based on the HR model and later introduces our proposed algorithms. In Section~\ref{sec:numericalresult}, we present the simulation results. Section~\ref{sec:conclusion} concludes the paper.



\section{Related Works}
\label{sec:ReWorks}
In this section, we have conducted a thorough literature review and presented the most related works in the area of countermeasures selection for IRSs.

The existing schemes for selecting
the countermeasure to balance the attack damage and response cost can be roughly divided into two categories:
single countermeasure selection and multi-countermeasure selection.
In \cite{NADEEM2014, Nice13}
, the authors considered intrusion cost and the impact of the countermeasures to select a countermeasure against the intrusions. However, a single countermeasure is efficient in single-path intrusion and it cannot cut off attacks on multiple paths for multi-path intrusions. One efficient scheme for the problem is to select multiple countermeasures concurrently to reduce the potential risks and maximize the overall response utility.

There have been many studies focusing on the Cyber-Physical Systems (CPS) domain.
To name some, an Autonomous Response Controller (ARC) is presented in \cite{KHOLIDY21} to react against cyberattacks with a focus on Cyber–Physical Power Systems. 
The ARC can autonomously evaluate the security improvement resulting from applying certain remediations and it covers the uncertainty of the IDS alerts by using the Competitive Markov Decision Processes. 
However, their cost model only includes the cost of CPU and RAM.
Authors in \cite{DefenceTree20} developed a method to achieve minimum cost defense in the context of CPS. Specifically, such a procedure chooses optimal defense nodes using their developed Atom Attack Defense Trees (A2DT), which is a variant of the more conventional Attack-Defense Tree (ADT) model. Then, the authors used an ad-hoc methodology to solve the path calculation over the A2DT. 


Risk reduction requires the definition and implementation of a security configuration by the deployment of various security mitigation actions to reduce the risk. Hence, an optimization problem should be solved to select the most effective yet cost-efficient security countermeasures. Many researchers apply approximate bio-inspired solutions like Genetic Algorithms (GAs) to approach this problem. In \cite{Bio21}, the authors proposed an Artificial Immune System (AIS) to select countermeasures to defeat cyberattacks through cloning and mutation phases. They, however, suggested a context-aware stop condition based on experimental outcomes and authors’ subjective beliefs. 

A methodology to generate response policies is presented in \cite{ResponseICC20} addressing four problems of countermeasure selection, countermeasure deployment, the order of deployment, and the duration they last.
The authors proposed a decision-making framework for IRS that optimizes the responses based on some attributes and 
proposed a GA with Three-dimensional Encoding to solve the problem. 
However, solving the four problems altogether by a GA takes a long time to converge considering all the random options that an individual can take. 
All these studies applying evolutionary-based methods accept the risk of receiving only near-optimal solutions after going through many iterations.

There have also been a few Machine Learning (ML)-based solutions for the countermeasure selection problem in the literature. Authors in \cite{DRLStationaryIRS} studied the applicability of Deep Reinforcement Learning (DRL) for intrusion response control on stationary systems. This work was later extended to a non-stationary system in \cite{HybridIRS2020}, with a reward function based on execution time and cost of the executed actions. 
Experiments compare the proposed DRL algorithm with a Q-learning solution to demonstrate its feasibility.

There have been many studies on both the private and social costs of countermeasures. They focus on finding the upper-bound a risk-neutral firm should invest in cybersecurity \cite{Lelarge2012}, estimating the uncertain risk faced by an organization under cyberattack \cite{REES2011}, presenting a model to analyze optimal cybersecurity investment in supply chain and firms \cite{SIMON2020}, investigating the optimal balance between the prevention and detection and containment safeguards to deal with the uncertainty of cybersecurity \cite{PAUL2019}, to name some.


Several studies also considered graph-based modeling for the attacks and countermeasures. In \cite{Multi_path20}, a framework to respond to multi-path attacks is formulated and presented, which appears to be NP-hard. To resolve such a problem, they proposed a greedy algorithm to select the most appropriate countermeasures in a cost-sensitive way. The authors leveraged the Probabilistic Attack-Response Tree models to represent potential attacker movements and evaluate three metrics of security benefit, deployment cost, and negative impact. 
Similarly, in \cite{DAG20}, authors
relied on the ADT formalized with Directed Acyclic Graphs and then extracted from an ADT its defense semantics describing how the attacker and defender may interact. 
The authors developed an open-source tool to automate the described methodology.

A risk assessment methodology based on the application of an Attack Graph (AG) was proposed in \cite{Orly2021}, enhancing the standard AG-based model. 
Later, a heuristic approach is introduced to compute the optimal countermeasure for deployment while minimizing the overall risk with specific budget constraints. 
For a more comprehensive literature review on IRSs, you can read the work in \cite{SurveyCounter} which analyzed the major reaction proposals from 2012 to 2017, focusing on their principal advantages and potential deficiencies.

The described works take important steps within the reaction strategies ecosystem. However, there are several downsides present in the literature such as a) simplified modeling, b) approximate solutions, c) convergence issues in the learning-based solutions and the accuracy of the data used to train, and d) lack of consideration of different QoS parameters.  Moreover,
another important consideration in some of the studies in the literature is that the reaction frameworks are applied to specific scenarios leveraging a comprehensive knowledge of the protected system. One could argue that, in order to be generic and applicable to several contexts, 
the network dynamicity in the selection of countermeasures should be reflected in the model and results. In other words, the security-vs-QoS trade-off should be better reflected in the model and solution such that the operator/security administrator can make a wise decision at different time instants according to the network conditions, available resources, and the threat level. 

To this extent, the proposed solution in this work reflects the above shortcomings from the literature by formulating a joint security-vs-QoS optimization problem and strategically selecting the best remediation actions from an effective countermeasure repository by employing a stable matching game.

\section{System Model and Problem Formulation}
\label{sec:ProbFor}


In order to protect a network against attacks, it is vital to design IRSs and make appropriate response decisions to dynamically eliminate potential consequences, reduce security risks, and at the same time consider their impact on the QoS costs \cite{Nice13}. These remediations intend to protect the infrastructure and more specifically the network nodes/assets, which include, the IoT devices, base stations, servers, SDN controller, and network functions, denoted as $\mathcal{U}= \{ u_1, \dots, u_n, \dots u_N \}$. Each of these nodes can be attacked from different layers, i.e., hardware, firmware, operating system, application and etc. In the following, we introduce the security and QoS models separately and later formulate the joint security-vs-QoS problem for countermeasure selection. Please note that in the following sections, the terms mitigation action, remediation, and security countermeasures are used interchangeably.

 \subsection{Security Model}

Let us assume there exist $A$ types of attacks in the system, where $\omega_a$ shows attack type $a$ (e.g. DoS or eavesdropping).
For each attack type, we consider a mitigation action list that shows the possible countermeasure types that can be taken. To address an attack type across all the affected nodes, different instantiations of a countermeasure type might need to be deployed. For instance, \textit{the reconfiguration of a node for more robustness} is a countermeasure type where this reconfiguration can vary across different nodes. Hence, in the rest of the paper to facilitate ease of writing, the terms \textit{attack} and \textit{countermeasures} refer to attack types and countermeasure types, respectively.
Let us define $\mathcal{L}(a)$ as the list of countermeasures that can be taken for attack $a$ as 

\begin{equation}
    \label{CountermeasureList}
    \mathcal{L}(a)=\{ \theta_c | \mathbb{U}_c^{na}=1, \forall u_n \in \mathcal{U} \}
\end{equation}
where $\mathbb{U}_c^{na}$ is an indicator function which is 1 if countermeasure $c$ addresses the $a$-th attack on node $n$ and $\theta_c$ is the $c$-th countermeasure that can be taken for the $a$-th attack.
 Let us show $C$ as the total number of system countermeasures to address all attacks, i.e., $|\bigcup_{a=1}^A \mathcal{L}(a)|=C$.
 
Each countermeasure addresses at least one attack. Let us show the attacks the $c$-th countermeasure can address as
\begin{equation}
    \label{VulList}
    \mathcal{W}(\theta_c)=\{ \omega_a | \mathbb{U}_c^{na}=1, \forall u_n \in \mathcal{U} \}
\end{equation}
where, $|\mathcal{W}(\theta_c)|>0, \quad \forall \theta_c$.
On the other hand, one attack might affect different nodes across the network. We define the list of all of the attacks in the network (across all nodes) that a generic $c$-th countermeasure can address, as

\begin{equation}
    \label{Covered_UE_List}
    \mathcal{V}(\theta_c)=\{ v_n^a | \mathbb{U}_c^{na}=1, \forall u_n \in \mathcal{U} \}
\end{equation}
where $v_n^a$ represents an attack of type $a$ detected on the $n$-th node, that can be addressed by countermeasure $\theta_c$.

Let us define $\bar{\mathcal{L}}$ as the set of selected atomic countermeasures to address the detected attacks in the network, where $|\bar{\mathcal{L}}| \geq 0$. Then the total number of addressed attacks 
in the network is $
    \Big|\bigcup_{\theta_c \in \bar{\mathcal{L}}}  \mathcal{V}(\theta_c) \Big| $.
    
IDSs provide risk assessment metrics such as the severity and probability of the attacks. Exploiting this information the Risk Factor (RF) for the $a$-th attack can be driven as 
$R_{a}= S(a) \cdot P(a)$,
where $0 \leq P(a) \leq 1$ is the probability/likelihood of occurrence of an attack and $S(a) \in [0 \quad 10]$ is its severity. IDSs can also assess how much security is improved if a specific security enhancement is applied, which in turn assists the IRSs in relatively quantifying the effectiveness of different countermeasures \cite{KHOLIDY21}.
After taking a countermeasure both severity and probability matrices will be updated to see how effective the selected countermeasure is. In the rest of the paper, we only focus on the RF as it represents how severe and probable an attack is. Let us show $\bar{R}_{a}(\theta_c)$ as the updated RF of the $a$-th attack after taking the countermeasure and $R_{a}$ as the RF before taking the countermeasure. As part of the threat mitigation process, we would like to reduce the updated RF as much as possible by taking the most suitable countermeasure, hence we define $\Delta R_{a}(\theta_c)=R_{a}-\bar{R}_{a}(\theta_c)$ as the gap between the initial value of RF and the updated RF that should be maximized. It should be noted that $\Delta R_{a}(\theta_c) > 0$, i.e., the updated RF for those addressed attacks after taking a countermeasure is always reduced.

On the other hand, since the nodes in the network have different importance, we consider a priority-aware security utility function and define the overall security utility function for those selected atomic countermeasures as
\begin{equation}
    \label{Sec_Privacy}
    \frac{\sum_{\theta_c \in \mathcal{\bar{L}}} \sum_{ n=1}^N \sum_{ a=1}^A \alpha_n \Delta R_{a}(\theta_c) }{\sum_{ n=1}^N \sum_{a=1}^A \alpha_n R_{a}} 
\end{equation}
The nodes' coefficients ($0<\alpha_n \leq 1$) show the importance of each of the network nodes, e.g., an SDN controller has a higher coefficient than an IoT device. This ensures we prioritize reducing the RF for more important network nodes. Eq. \eqref{Sec_Privacy} calculates the weighted reduced RF of those addressed attacks across the nodes over the weighted initial RF values.

\subsection{Time and Energy Considerations}

The implementation of countermeasures exhausts some resources. For instance, the response \textit{dropping the malicious commands} consumes computer CPU and memory resources to analyze protocol data units of communication messages, along with consuming storage resources for recording all known attack signatures \cite{KHOLIDY21}. Hence, there will be some energy consumption and time spent in both the preparation and deployment phases of applying countermeasures. 
Assuming the countermeasures are deployed sequentially, we define time as the time duration spent for applying the $c$th countermeasure and formulate it as $
    T^{\text{tot}}(\theta_c)= T^{\text{pre}} (\theta_c)+ T^{\text{dep}}(\theta_c)$,
where $T^{\text{pre}}({\theta_c})$ is the time spent for the preparation of the countermeasure (officially termed as \textit{service preparation time}), and $T^{\text{dep}}({\theta_c})$ is the time spent for the deployment of the countermeasure (commonly termed as \textit{service deployment time}).  Countermeasure deployment can be manual or automatic (deployed by the system); however, here we focus on the automatic deployment of countermeasures. Thus, the overall \textit{time} for those selected atomic countermeasures can be written as

\begin{equation}
    \label{Tot_Time}
    \sum_{\theta_c \in \bar{\mathcal{L}}} T^{\text{tot}}(\theta_c)
\end{equation}

On the other hand, the total energy consumption by the system for the $l$th countermeasure can be written as $
    E^{\text{tot}}({\theta_c})= E^{\text{pre}}(\theta_c)+E^{\text{dep}}({\theta_c}) $,
where $E^{\text{pre}}({\theta_c})$ is the system energy consumption for preparation of the countermeasure and $E^{\text{dep}}(\theta_c)$ is the system energy consumption for the deployment of the countermeasure. Hence, we can define the overall energy consumption for those selected atomic countermeasures as

 \begin{equation}
    \label{Tot_Ene}
    \sum_{\theta_c \in \bar{\mathcal{L}}} E^{\text{tot}}(\theta_c)
\end{equation}

\subsection{Monetary Cost Consideration}

The defense cost is an important reference index in security countermeasure selection problems. For instance, the defense cost for the ADTree of a small network system with 15 attack nodes can reach \$300,000, which is a heavy burden for small and mid-sized enterprises \cite{DefenceTree20}. Thus, the monetary cost of a reaction (including fixed and variable costs) is an important metric, which can include hardware equipment, software development, labor, license, or loss resulting from users' dissatisfaction. 
In this regard, deprivation cost is also defined as the economic valuation of the post-disaster (i.e., cyber attacker) human suffering (i.e., attacked firms' loss) associated with a lack of access to a good/service \cite{DisasterModel13}. 
For instance, a DoS attack can cause a
degradation of service on an ISP’s network, resulting in service level agreements being violated. A cost could be reimbursements to customer networks. The same incident might lead to a loss of reputation for the ISP, which is a qualitative impact
\footnote{More detailed modeling can be considered to extend the monetary cost representation, however, this is out of the scope of this research.}.
Let us denote $\Psi^{\text{tot}}({\theta_c})$ as the monetary cost of taking $c$th countermeasure including the above-mentioned factors.

\subsection{Problem Formulation}

Having defined the security and QoS models, the joint security-vs-QoS utility function for an atomic countermeasure is defined as 
\begin{align}
    \label{Cost_fun}
    \Upsilon(\theta_c)= \frac{ \frac{ \sum_{ n=1}^N \sum_{ a=1}^A \alpha_n \Delta R_{a}(\theta_c) }{\sum_{ n=1}^N \sum_{a=1}^A \alpha_n R_{a}} }{\beta_1 T^{\text{tot}}(\theta_c)+\beta_2 E^{\text{tot}}(\theta_c) +\beta_3 \Psi^{\text{tot}}(\theta_c)}
\end{align}
where $\beta_*$ refers to the coefficient of the QoS parameters such that $\sum_{i=1}^3 \beta_i=1$. 
As the countermeasure selection problem is restricted by QoS costs, the IRS might not always be able to address all the attacks at once in a large network. On the other hand, best efforts should be made to minimize the assets' exposure to threats. The goal of the joint security-vs-QoS optimization problem is to optimize the IRMs by selecting the most suitable countermeasures in order to maximize \textbf{(a) the joint utility function, and (b) the number of addressed attacks across the nodes.} Hence, we define

\begin{align}
\label{Obj-fun2}
& \mathbf{P1}:  \underset{\bar{\mathcal{L}}}{\max} \Bigg\{
 \sum_{\forall \theta_c \in \bar{\mathcal{L}}} \Upsilon(\theta_c), \bigg| \bigcup_{\forall \theta_c \in \bar{\mathcal{L}}}   \mathcal{V}(\theta_c) \bigg| \Bigg\}
\end{align}%
subject to
\begin{align}
& \mathbf{C1.1}: \sum_{\forall \theta_c \in \bar{\mathcal{L}}} \Psi^{\text{tot}}({\theta_c})< \xi \label{c1.1}
\end{align}
%
The objective function \eqref{Obj-fun2} targets to find the best countermeasures to be selected in the decision vector $\bar{\mathcal{L}}$ to jointly maximize the utility function and the number of addressed attacks across the nodes. Constraint \eqref{c1.1} represents the maximum monetary budget for taking countermeasures. 

Problem $\mathbf{P1}$ is a bi-objective optimization problem. In order to solve the problem we employ an \textit{$\epsilon$-constraint} method. The \textit{$\epsilon$-constraint} method generates single objective sub-problems by transforming all but one objective into constraints \cite{BerubeEpsilon}. As our problem is a bi-objective optimization problem, this is a good method, as it can generate the exact Pareto front by varying the upper-bound of the new {constraint}\footnote{The impact of varying upper-bounds will be studied in the simulation results section}.
This method has been broadly used in the literature \cite{Alexandros20}. 
Hence, by following the \textit{$\epsilon$-constraint} approach we transform $\mathbf{P1}$ to $\mathbf{P2}$ as below:

\begin{align}
\label{Obj-fun3}
& \mathbf{P2}:  \underset{\bar{\mathcal{L}}}{\max} \Bigg\{
 \sum_{\forall \theta_c \in \bar{\mathcal{L}}} \Upsilon(\theta_c) \Bigg\}
\end{align}%
subject to
\begin{align}
& \mathbf{C2.1}: \sum_{\forall \theta_c \in \bar{\mathcal{L}}} \Psi^{\text{tot}}({\theta_c})< \xi \label{c2.1}\\
& \mathbf{C2.2}: \bigg| \bigcup_{\forall \theta_c \in \bar{\mathcal{L}}}   \mathcal{V}(\theta_c) \bigg| \geq \bar{M} \label{c2.2}
\end{align}

In $\mathbf{P2}$, a new bounded constraint $\mathbf{C2.2}$ is defined which was one of the objectives in $\mathbf{P1}$, indicating the number of addressed attacks across all nodes shall be larger than the threshold $\bar{M}$.

The optimization problem aims at taking the most suitable set of countermeasures from the mitigation action list in order to maximize the joint utility function of the system and addresses a minimum of a certain number of attacks by a maximum defined monetary budget.
Different coefficients for QoS parameters in \eqref{Cost_fun} enforce to outweigh some of the objectives (based on the network condition), which can be set dynamically at different time instants according to our priorities/preferences of the objectives.

\section{Many-To-One-Stable Matching Solution}
\label{sec:solution}
In this section, we propose assigning/matching the countermeasures to the attacks by a framework that considers stability as the solution concept instead of optimality.
The applied framework involves a two-sided matching game.
A Stable Matching Problem (SMP) is produced by a distributed process that matches together preference relations of the two sides that are of the same size. The order of preferences is given by the strictly ranked rate utilities of the two sides \cite{StableMatching15}. SM solutions have been broadly used in wireless networks for problem-solving \cite{SM_Wireless}.
In our problem, however, the number of detected attacks might be different from the number of countermeasures (i.e., different set sizes), which means we need to seek a many-to-one generalization of SMP called the HR problem \cite{STP_Thesis}. 

\subsection{Hospital/Residents Model}

In the HR problem, each hospital has one or more posts to be filled, and a preference list ranking a subset of the residents. Likewise, each resident has a preference list ranking a subset of the hospitals. The capacity of a hospital is its number of available posts. We need to match each resident to at most one hospital such that no hospital exceeds its capacity threshold while observing the stability conditions \cite{STP_Thesis}.
We can map the residents to the attacks and the hospitals to the countermeasures, and design an SM between the two sides in order to mitigate the attacks' impact on the system.

The SMP is modeled by the tuple $\big \langle \mathcal{A}, \mathcal{C}, \{ U_{a}\}_{a \in \mathcal{A}}, \{ U_{c}\}_{c \in \mathcal{C}},  \{ q_c\}_{c \in \mathcal{C}} \big \rangle$, where $\mathcal{A}$ is the set of attacks, $\mathcal{C}$ is the set of countermeasures, $\{ U_{a}\}$ and $\{ U_{c}\}$ are the utility functions of attacks and countermeasures, and $\{ q_c\}$ are the quotas associated with each countermeasure representing the maximum number of attacks they can address, where in our work is equal to the $\mathcal{W}(\theta_c)$, i.e., no limitations on capacity. Let us introduce the following definitions \cite{Matching_1990}:

\begin{definition}
\label{Def_Matching}
A Matching $M$ is from the set $\mathcal{A} \cup \mathcal{C}$ into the set of unordered family  of elements $\mathcal{A} \cup \mathcal{C}$ such that:
\begin{enumerate}
    \item $|M(a)|=1, \forall a \in \mathcal{A}$
    \item  $1 \leq |M(c)| \leq q_c, \forall c \in \mathcal{C}$
    \item $M(a)=c$ if and only if $a \in M(c)$.
\end{enumerate}
\end{definition}
In Definition \ref{Def_Matching}, the first criterion means each attack (resident) is matched to one countermeasure (hospital), and the second one means each countermeasure has a maximum capacity of $q_c$ as the number of attacks it can address, and the last criterion means a countermeasure $c$ is the match for attack $a$, iff the attack $a$ is in the preference list of countermeasure $c$ (i.e., $a$ is acceptable to $c$). It should be noted that we have set $q_c=\mathcal{W}(\theta_c)$ and this guarantees that no countermeasure will be over-subscribed.

\begin{definition}
\label{Blocked_Matching}
The matching $M$ is blocked by the pair $(a,c) \in \mathcal{A} \times \mathcal{C}$ if the following conditions are satisfied
\begin{enumerate}
    \item $a$ and $c$ find each other acceptable
    \item $U_a(c) > U_a(M(a))$
    \item either $|M(c)| < q_c$ and $U_c(a) >0$
    \item or $U_c(a) >U_c(a^{\prime})$ for some $a^{\prime} \in M(c)$
\end{enumerate}
\end{definition}
According to Definition \ref{Blocked_Matching} if conditions (1), (2), and either of (3) or (4) occur that means either of the sides prefers each other over their current matching.

\begin{definition}
\label{Def_Stable}
A Matching $M$ is stable if it admits no blocking pair.
\end{definition}
The stability as a criterion for matching ensures that neither side of the game has the incentive to improve outside of the matching scheme.

Let us define the cost function of an attack as:
\begin{align}
    \label{U_V} 
    U_a(c)=& \beta_1 \Bigg( \frac{ T^{\text{tot}}_{ca}
     -x_{\text{min}}}{x_{\text{max}}-x_{\text{min}}} \Bigg)+\beta_2 \Bigg( \frac{ E^{\text{tot}}_{ca}
     -x_{\text{min}}}{x_{\text{max}}-x_{\text{min}}} \Bigg)+ \nonumber\\ & \beta_3 \Bigg( \frac{ \Psi^{\text{tot}}_{ca}
     -x_{\text{min}}}{x_{\text{max}}-x_{\text{min}}} \Bigg)
\end{align}

and the utility function of a countermeasure as
\begin{equation}
    \label{U_C}
    U_c(a)=\frac{\sum_{n=1}^{N_a} \alpha_n \Delta R_{ca}}{\sum_{n=1}^{N_a} \alpha_n R_a}
\end{equation}
where $x_{\text{max}}$ and $x_{\text{min}}$ denote the maximum and minimum value of the respective QoS parameter (provided in Table \ref{tab1}), $\sum_{i=1}^3 \beta=1$, and $N_a$ is the number of nodes with attack $a$. 

Let us define $x_{ca} \in \{0,1\}$ as a decision variable meaning if countermeasure $c$ and attack $a$ are matched. Then the problem $\mathbf{P2}$ can be reformulated in the form of an SMP as

\begin{align}
\label{Obj-fun4}
& \mathbf{P3}:  \underset{\mathbf{x}}{\max} \Bigg\{
 \sum_{c=1}^{C} \sum_{a=1}^{A}  \Bigg( \frac{U_c(a)}{U_a(c)} \Bigg) x_{ca} \Bigg\}
\end{align}%
subject to
\begin{align}
& \mathbf{C3.1}: \sum_{c=1}^{C} \sum_{a=1}^{A}  \Psi^{\text{tot}}({\theta_c}) x_{ca}< \xi \label{c3.1}\\
& \mathbf{C3.2}: \sum_{c=1}^{C} \sum_{a=1}^{A}  N_a x_{ca} \geq \bar{M} \label{c3.2}\\
& \mathbf{C3.3}: \sum_{c=1}^C x_{ca}=1, \quad \forall a \in \mathcal{A} \label{c3.3}\\
& \mathbf{C3.4}: x_{ca} \in \{0,1\} \label{c3.4}
\end{align}
where $\mathbf{x}$ is the matching decision vector identifying the selected atomic countermeasures. Constraint \eqref{c3.1} and \eqref{c3.2} represent the monetary cost and the minimum number of attacks to be addressed across all nodes. Constraint (\ref{c3.3}) assures that an attack is matched with only one countermeasure. Constraint \eqref{c3.4} indicates that a countermeasure and an attack are either matched or not (binary value). It should be noted that $U_a(c)>0 \quad \forall c$ in order to get a feasible solution. Please note the difference between $A$ and $\Bar{M}$, where the first shows the number of attacks and the latter the minimum number of addressed attacks across all nodes in the network.

\begin{remark}
    \label{R1}
    Weighting the two sides of the SMP (i.e, security and QoS) does not have any impact on the preference list formation. Hence, it does not yield different solutions in $\mathbf{P3}$.
\end{remark}

\begin{remark}
    \label{R2}
    The existence of different weights on each side of the game (if applicable) can result in different matching; hence, different solutions in  $\mathbf{P3}$.
\end{remark}

In order to solve the above SMP, we first present its upper bound through theoretical analysis, and later propose distributed solutions.

\subsection{Theoretical Analysis}
One of the most famous challenges in Combinatorial optimization is the Knapsack problem, which has been proven to be \textit{NP-Hard} \cite{IntoAlg}. One of the variants of the Knapsack problem is called the Multiple Knapsack Problem (MKP). In MKP, there exist multiple Knapsacks each with a certain capacity. The decision is whether an item should be selected and if yes, to which Knapsack it should be allocated to. 
$\mathbf{P3}$ resembles a Multiple Multi-dimensional Knapsack Problem (MMKP), where the two dimensions are \eqref{c3.1} and \eqref{c3.2} and $C$ represents the number of Knapsacks. As MMKP is also \textit{NP-hard}, similar to \cite{MMNP}, we derive the upper bound and discuss the exact solution by employing the decomposition and BB techniques as the dynamic alternative approaches require huge memory requirements.

As the main challenge in BB algorithm is the determination of the upper bound, we only focus on the derivation of the upper bound of $\mathbf{P3}$. The upper bound for the standard Knapsack problem has been calculated by greedy algorithms \cite{IntoAlg}. Hence we decompose the MMKP into several simple standard Knapsack problems and the upper bound of the original $\mathbf{P3}$, which is an MMKP, can be obtained by solving the sub-problems in parallel. We first relax two of the constraints in $\mathbf{P3}$ and rewrite it as

\begin{align}
\label{Obj-fun5}
& \mathbf{P4}: \underset{\mathbf{x}}{\max} 
  \mathcal{L}(x,\rho, \bm{\vartheta})= 
 \sum_{c=1}^{C} \sum_{a=1}^{A}  \Bigg( \frac{U_c(a)}{U_a(c)} \Bigg) x_{ca}\\ \nonumber & + \rho \Bigg( \sum_{c=1}^{C} \sum_{a=1}^{A} N_a x_{ca} - \bar{M} \Bigg) + \sum_{a=1}^A \vartheta_a \Big( \sum_{c=1}^{C} x_{ca}-1 \Big)
\end{align}%
subject to $\mathbf{C3.1}$ and $\mathbf{C3.4}$, 
where $\rho$ and $\bm{\vartheta}=[\vartheta_1,\dots\,\vartheta_A]$ are the dual variables associated with constraints $\mathbf{C3.2}$ and $\mathbf{C3.3}$, respectively. The optimum value of $\mathbf{P4}$ is an upper bound of the optimum value of $\mathbf{P3}$ for arbitrary non-negative $\rho$ and $\bm{\vartheta} \in R^A$. To further gain a tight upper bound, we have to optimize $\mathbf{P4}$ for the dual variables as

\begin{align}
\label{Obj-fun6}
& \mathbf{P5}: g(\rho, \bm{\vartheta})= \underset{\rho >0, \mathbf{ \bm{\vartheta} }}{\min} 
 \mathcal{L}(x,\rho, \bm{\vartheta} )
\end{align}%
subject to $\mathbf{C3.1}$ and $\mathbf{C3.4}$.
Considering $\mathbf{P4}$ we can rewrite $\mathcal{L}(x,\rho, \bm{\vartheta} )$ as

\begin{align}
\label{MMKP_Result}
 \mathcal{L}(x,\rho, \bm{\vartheta} )& = \sum_{c=1}^C \sum_{a=1}^A  \frac{U_c(a)}{U_a(c)} x_{ca} + \rho  \sum_{c=1}^C \sum_{v=1}^A  N_a x_{ca}\\ \nonumber & -\rho\Bar{M}+ \sum_{c=1}^C \sum_{a=1}^A  \vartheta_a x_{ca}-\sum_{a=1}^A \vartheta_a\\ \nonumber & = 
 \sum_{c=1}^C \Bigg\{ \sum_{a=1}^A \Bigg( \frac{U_c(a)}{U_a(c)}+\rho N_a+ \vartheta_a \Bigg) x_{ca}  \Bigg\}\\ \nonumber &- \rho \Bar{M}-\sum_{a=1}^A \vartheta_a
\end{align}%

Obviously, the upper bound of the original MMKP can be computed by decomposing equation \eqref{MMKP_Result} into $C$ standard Knapsack problems in parallel that can significantly reduce the computing power. In the simplest case, a sub-problem for each countermeasure $c$ can be written as the following minimization problem

\begin{align}
\label{Obj-fun7}
& \mathbf{P6}:\underset{\mathbf{\mathbf{x}}}{\min} \sum_{a=1}^A \Bigg( \frac{U_c(a)}{U_a(c)}+\rho N_a+ \vartheta_a \Bigg) x_{ca}
\end{align}%
subject to $\mathbf{C3.1}$ and $\mathbf{C3.4}$

Denote the minimum value of each $c$-th sub-problem as $\mu_c$, the summation of each minimum value of the $c$-th sub-problem of $\mathbf{P6}$ plus the last two terms in \eqref{MMKP_Result} gives the upper bound of the original MMKP in $\mathbf{P3}$ as

\begin{equation}
    \label{Sol}
    \sum_{c=1}^C \mu_c- \rho \Bar{M}- \sum_{a=1}^A \vartheta_a
\end{equation}
The solution to \eqref{Sol} can be calculated efficiently due to the greedy choice property possessed by the standard Knapsack problem; hence, convergence is guaranteed. Now we can rewrite $\mathbf{P5}$ as

\begin{align}
\label{Obj-fun8}
& \mathbf{P7}: g(\rho, \bm{\vartheta})= \underset{\rho>0, \mathbf{ \bm{\vartheta} }}{\min} 
      \Bigg(\sum_{c=1}^C \mu_c- \rho \Bar{M}- \sum_{a=1}^A \vartheta_a \Bigg)
\end{align}%
subject to $\mathbf{C3.1}$ and $\mathbf{C3.4}$.
Please note that the process of obtaining the optimal dual variables for $\mathbf{P7}$ and the rest of the BB algorithm follow the standard procedure, and thus will not be discussed here.

\subsection{Distributed Stable Matching-based Solution}

In this section, we propose two algorithms to solve the formulated SMP by considering the constraints in our problem. Each of these algorithms considers the preference of one side of the game. Hence, we introduce an Attack-oriented SM (ASM) algorithm and a Countermeasure-oriented SM (CSM) algorithm.

In order to respect the constraint $\mathbf{C3.2}$ in our SM solutions, we first consider a pre-processing step. As illustrated in Alg. \ref{Feas_Sol} among all the possible countermeasures sets (that form a solution) combination that can be taken for addressing the attacks, we select those solutions that can cover a minimum of a certain number of attacks in the network 
as feasible solutions. This allows us to ensure the constraint $\mathbf{C3.2}$ is always respected and the complexity of the SM solution algorithms will be reduced by solving the problem only for the feasible solutions instead of all solutions. Then these feasible solutions, $\mathcal{S}$, are passed to the SM algorithms to find the matching solutions. 

\renewcommand{\algorithmicrequire}{\textbf{Input:}}
\renewcommand{\algorithmicensure}{\textbf{Output:}}
\begin{algorithm}
\caption{Feasible Solution Formation}\label{Feas_Sol}
\footnotesize
\begin{algorithmic}[1]
\REQUIRE{$C$, $\bar{M}$}
\ENSURE{$\mathcal{S}$}
\FOR{each $i=1$ to $C$}
\STATE $Y \leftarrow$ all $\binom{C}{i}$ countermeasure solution combinations
\FOR{each solution $j$ in $Y$}
\IF{$\sum_{c \in Y_j} \sum_{a=1}^{A} x_{ca} N_a \geq \bar{M}$} 
\STATE $\mathcal{S} \leftarrow Y_j$
\ENDIF
\ENDFOR
\ENDFOR
\end{algorithmic}
\end{algorithm}
We adopt the \emph{Gale-Shapley} algorithms \cite{STP_Thesis,GaleShapely} and propose two algorithms which are namely ASM and CSM. The difference between the two algorithms lies in the fact that which side's preference is considered for the matching. The preference lists of ASM and CSM are composed using $\eqref{U_V}$ and $\eqref{U_C}$, respectively.
Alg. \ref{Vul_Alg} shows the ASM solution which finds the matching for each feasible solution in $\mathcal{S}$. The algorithm continues matching each attack with its highest preferences until the required number of attacks across the nodes is covered.


\renewcommand{\algorithmicrequire}{\textbf{Input:}}
\renewcommand{\algorithmicensure}{\textbf{Output:}}
\begin{algorithm}
\caption{Attack-oriented SM Algorithm}\label{Vul_Alg}
\footnotesize
\begin{algorithmic}[1]
\REQUIRE{Preference list of $\mathcal{V}$ and $\mathcal{C}$}
\ENSURE{ASM solution covering a minimum of $\bar{M}$ attacks}

\textbf{Initialization phase:}
\STATE initialize all of the $a \in \mathcal{A}$ and $c \in \mathcal{C}$ to be free


\textbf{Matching evaluation:}
\WHILE{$\Big(\sum_{c=1}^{C} \sum_{a=1}^{A} x_{ca} N_a\Big) <\bar{M}$}
\STATE $c \coloneqq$ first countermeasure on $a$'s list
\STATE $M=M \cup \{(a,c)\}$
\ENDWHILE

\end{algorithmic}
\end{algorithm}
In Alg.\ref{Count_Alg}, on the other hand, after a countermeasure proposes an attack and they are matched, any successor countermeasure is removed from the attack's list. This is due to the fact that the newly matched attack will not prefer any of the successor countermeasures in the future over the one to which it is matched (as it has lower preferences for them). This shortens the SM solution space, as each countermeasure does not propose those attacks to which it cannot match.
The CSM Algorithm is performed vertically, i.e., matching the first preference of each of the countermeasures, then their second preference and etc., as it results in better performance and it is more fair w.r.t. the horizontal matching when the coverage of the attacks is supposed to be performed partially, i.e., $\mathbf{C3.2}$. However, when the percentage of the covered attacks across the nodes is 100\% both perspectives (i.e., vertical and horizontal matching) result in the same solution.

\renewcommand{\algorithmicrequire}{\textbf{Input:}}
\renewcommand{\algorithmicensure}{\textbf{Output:}}
\begin{algorithm}
\caption{Countermeasure-oriented SM Algorithm}\label{Count_Alg}
\footnotesize
\begin{algorithmic}[1]
\REQUIRE{Preference list of $\mathcal{A}$ and $\mathcal{C}$}
\ENSURE{CSM solution covering a minimum of $\bar{M}$ attacks}

\textbf{Initialization phase:}
\STATE initialize all of the $a \in \mathcal{A}$ and $c \in \mathcal{C}$ to be free


\textbf{Matching evaluation:}
\WHILE{$\Big(\sum_{c=1}^{C} \sum_{a=1}^{A} x_{ca} N_a\Big) <\bar{M}$}
\STATE $a \coloneqq$ first attack on $c$'s list
\IF{$a$ is already assigned to $c^{\prime}$ }
\STATE $M=M \setminus \{(a,c^{\prime})\};$
\ENDIF
\STATE $M=M \cup \{(a,c)\}$
\FOR{each successor of $c^{\prime}$ of $c$ on $a$'s list}
\STATE delete the pair $(a,c^{\prime})$;
\ENDFOR
\ENDWHILE

\end{algorithmic}
\end{algorithm}


The overall steps for our proposed solution is shown in Alg.~\ref{Solution}. First the feasible solution set is formed, then the result of the SM (either ASM or CSM) is stored and after performing this process for all the feasible solutions, the one that respects $\mathbf{C3.1}$ and maximizes $\mathbf{P3}$ is the final solution.

\renewcommand{\algorithmicrequire}{\textbf{Input:}}
\renewcommand{\algorithmicensure}{\textbf{Output:}}
\begin{algorithm}
\caption{The proposed Solution}\label{Solution}
\footnotesize
\begin{algorithmic}[1]
\REQUIRE{$C$, $\bar{M}$, preference lists}
\ENSURE{Solution of $\mathbf{P3}$}
\STATE Run Alg.\ref{Feas_Sol} to obtain the set $\mathcal{S}$
\FOR{each $i$ in $\mathcal{S}$}
\STATE Run Alg.\ref{Vul_Alg} or to Alg.\ref{Count_Alg} to solve the SM problem
\STATE $\mathcal{M} \leftarrow M_i$
\ENDFOR
\STATE return  $ \operatorname*{argmax}_{M \in \mathcal{M}}\Bigg\{
 \sum_{c=1}^{C} \sum_{a=1}^{A} x_{ca} \Bigg( \frac{U_c(a)}{U_a(c)} \Bigg) \Bigg\}$ s.t. $\mathbf{C3.1}$

\end{algorithmic}
\end{algorithm}

The time complexity of Alg. \ref{Solution} is $O(C|Y|+|\mathcal{S}|CA)$. The complexity of Line 1 (i.e., Alg.\ref{Feas_Sol}) is $O(C|Y|)$. The complexity of Alg.\ref{Vul_Alg} or Alg.\ref{Count_Alg} is $O(CA)$, which shows the dimension of the preference lists \cite{SMComplexity}.
The SM algorithms provide a good sub-optimal solution as has been demonstrated in the literature \cite{LeshamJSAC}.

The outcomes of the two algorithms are not necessarily
equal. In the ASM solution, the attacks are allocated to their most preferred countermeasure as the countermeasures do not become over-subscribed, hence, the ASM algorithm gives the best QoS-wise SM solution. However, in the CSM solution, the countermeasures propose the attacks according to their preferences (i.e., security) and they are matched if this is the best proposal it has received (in terms of QoS). Hence, the solution of CSM is more balanced in terms of security and QoS.

\begin{prop}
    \label{P4Matching}
    ASM and CSM respect all the criteria of a matching game.
\end{prop}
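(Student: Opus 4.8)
The plan is to read ``criteria of a matching game'' as the conjunction of two requirements: that each algorithm returns a valid matching in the sense of Definition~\ref{Def_Matching}, and that this matching is stable in the sense of Definition~\ref{Def_Stable}, i.e.\ it admits no blocking pair (Definition~\ref{Blocked_Matching}). I would therefore split the argument into a validity part and a stability part, treating ASM and CSM in parallel.

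First I would verify the three conditions of Definition~\ref{Def_Matching}. Condition~(1) follows by construction: in Alg.~\ref{Vul_Alg} each processed attack is assigned a single countermeasure, while in Alg.~\ref{Count_Alg} the removal step $M=M\setminus\{(a,c')\}$ that precedes every new assignment guarantees that $a$ is never held by two countermeasures simultaneously, so $|M(a)|=1$. Condition~(2) is immediate from the modelling choice $q_c=\mathcal{W}(\theta_c)$: since a countermeasure can be paired only with attacks it actually addresses, $|M(c)|\le|\mathcal{W}(\theta_c)|=q_c$ always holds and no countermeasure is over-subscribed. Condition~(3) holds because both algorithms add and delete the pair $(a,c)$ symmetrically, so $M(a)=c$ exactly when $a\in M(c)$.

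The substance of the proof is stability, which I would establish by contradiction using the deferred-acceptance invariants of the \emph{Gale-Shapley} procedure that the two algorithms instantiate. For ASM the argument is short: since capacities never bind, each matched attack is assigned the first entry of its preference list, i.e.\ its most preferred acceptable countermeasure. A blocking pair $(a,c)$ would require $U_a(c)>U_a(M(a))$ (condition~(2) of Definition~\ref{Blocked_Matching}), but no countermeasure ranks above $M(a)$ on $a$'s list, so this condition can never be met and no blocking pair exists. For CSM I would suppose a blocking pair $(a,c)$ and exploit the successor-deletion step: because $a$ prefers $c$ to $M(a)$, $c$ lies above $M(a)$ on $a$'s list, so $c$ would have proposed to $a$ before $a$ ever settled on the lower-ranked $M(a)$; at that moment $a$ becomes matched to $c$ and every successor of $c$ on $a$'s list is deleted, so $a$ could only have moved to a \emph{more} preferred countermeasure afterwards, contradicting $U_a(c)>U_a(M(a))$. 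A symmetric reading of the proposal order handles the countermeasure-side conditions~(3)--(4).

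I expect the main obstacle to lie in the CSM stability argument, and specifically in reconciling the classical Gale-Shapley invariant with two non-standard features of Alg.~\ref{Count_Alg}: the loop terminates as soon as the coverage threshold $\bar{M}$ is reached rather than when all preference lists are exhausted, so some attacks may remain unmatched and one must argue carefully that an unmatched attack cannot participate in a blocking pair (condition~(2) of Definition~\ref{Blocked_Matching} presupposes a defined $M(a)$); and the ``vertical'' proposal order must be shown to preserve the property that once an attack is matched it never subsequently ends up with a less preferred countermeasure. Making these two points rigorous, rather than the validity checks, is where the real work sits.
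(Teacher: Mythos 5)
Your proof is correct, but you have proved more than the paper does here, and the difference is worth flagging. The paper reads ``all the criteria of a matching game'' narrowly: its proof of Proposition~\ref{P4Matching} checks only the three conditions of Definition~\ref{Def_Matching} (citing $\mathbf{C3.3}$ for condition~(1), the capacity choice $q_c=\mathcal{W}(\theta_c)$ for condition~(2), and mutual acceptability via the preference lists for condition~(3)) and stops there. Your validity part is essentially this same argument, just with the single-assignment property justified by the algorithms' construction rather than by constraint $\mathbf{C3.3}$. Everything you add about stability belongs to the paper's \emph{separate} Proposition~\ref{SM_Stable}, and your instinct about where the difficulty lies is exactly right: because Alg.~\ref{Count_Alg} terminates once the coverage threshold $\bar{M}$ is reached, the paper only claims stability \emph{for full attack coverage} and explicitly concedes that partial coverage may yield unstable matchings (handled later via Pareto fronts in Proposition~\ref{Multi_SM} and Section~\ref{Pareto_Sec}). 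So your stability argument is sound where it applies, but as a proof of Proposition~\ref{P4Matching} it is superfluous, and as a proof of the stronger reading it would need the same full-coverage restriction the paper imposes; the obstacle you identify is not one you need to overcome here, it is one the paper sidesteps by weakening the stability claim.
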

\begin{proof}
    \label{P4Matching_Proof}
    We need to ensure the three criteria in Definition \ref{Def_Matching} are respected. $\mathbf{C3.3}$ guarantees that the first criterion in Definition \ref{Def_Matching} is respected for both ASM and CSM algorithms. Moreover, as we assume the number of attacks matched with a countermeasure cannot exceed the capacity of the countermeasure, criterion 2 is also respected. Finally, the third criterion in Definition \ref{Def_Matching} is also respected as each matching pair is performed following the preference lists of the two sides, i.e., they are in each other's preference list \textit{iff} the two sides are acceptable to each other. Thus, the proposition is proved.
\end{proof}

\begin{prop}
    \label{SM_Stable}
    The ASM and CSM algorithms are stable for full attack coverage.
\end{prop}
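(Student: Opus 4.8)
The plan is to argue by contradiction using the deferred-acceptance structure that both algorithms inherit from Gale-Shapley, and to exploit the fact that full coverage (i.e. $\bar{M}$ equal to the total number of node-attacks $\sum_a N_a$) forces \emph{every} attack to be matched, so that the while-loops in Alg. \ref{Vul_Alg} and Alg. \ref{Count_Alg} run to completion and $M(a)$ is well defined for all $a \in \mathcal{A}$. Concretely, I would fix the output $M$, suppose it is blocked by some pair $(a,c)$ in the sense of Definition \ref{Blocked_Matching}, and then derive a contradiction separately for each algorithm, knowing from Definition \ref{Def_Stable} that ruling out blocking pairs establishes stability.

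I would treat the CSM algorithm first, since it is the genuine deferred-acceptance (countermeasure-proposing) variant and is the harder case. The key invariant to establish is that, over the course of Alg. \ref{Count_Alg}, the countermeasure to which an attack $a$ is tentatively assigned can only improve with respect to $U_a(\cdot)$: whenever $a$ accepts a proposal from $c$, every successor of $c$ on $a$'s list is deleted, so $a$ is never re-matched downward, and conversely a pair $(a,c)$ is removed precisely when $a$ already holds a match it prefers to $c$ (so $c$ never proposes to such an $a$). Given the blocking conditions, condition (2) gives $U_a(c) > U_a(M(a))$, while conditions (3)--(4) say $c$ prefers $a$, either having free capacity with $U_c(a)>0$ or satisfying $U_c(a) > U_c(a')$ for some $a' \in M(c)$. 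Since countermeasures propose in decreasing order of their own preference and $c$ ranks $a$ favourably, $c$ must have proposed to $a$ at some iteration; at that moment $a$ would have accepted $c$ unless it already held a strictly better countermeasure, and by the monotone-improvement invariant the final match satisfies $U_a(M(a)) \geq U_a(c)$, contradicting (2).

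Next I would handle the ASM algorithm, where, as noted in the text, no countermeasure is ever over-subscribed because $q_c = |\mathcal{W}(\theta_c)|$; hence every attack is assigned to its most preferred acceptable countermeasure. This makes $M(a)$ the top element of $a$'s preference list among acceptable countermeasures, so there is no acceptable $c$ with $U_a(c) > U_a(M(a))$, and condition (2) of Definition \ref{Blocked_Matching} can never be met. Thus no blocking pair exists and $M$ is stable, this case being essentially immediate once completeness of the matching is guaranteed.

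I would close by emphasising where the full-coverage hypothesis is essential: it is what guarantees $M(a)$ is defined for every attack. Without it, the loop halts once $\bar{M}$ node-attacks are covered, leaving some attack $a$ unmatched; such an $a$ together with any acceptable countermeasure $c$ that still has free capacity and $U_c(a)>0$ would satisfy conditions (1), (2) and (3) and thereby block $M$. The main obstacle I anticipate is the CSM case: making the monotone-improvement invariant and the ``a pair is deleted exactly when dominated'' bookkeeping fully rigorous from the pseudocode, since the contradiction hinges on correctly pairing the proposal order of the countermeasures with the deletion of successors on each attack's list. The ASM case and the justification of the full-coverage requirement are comparatively straightforward.
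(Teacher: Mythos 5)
Your proof is correct, and for the CSM algorithm it is essentially the paper's own argument: the paper also supposes a countermeasure $c_2$ prefers some attack $a_1$ to its assigned match, notes that by the Gale--Shapley proposal order $c_2$ must then have proposed to $a_1$ earlier and been rejected only because $a_1$ already held (and, by the monotone-improvement property you make explicit, ultimately kept) something it prefers, and concludes no blocking pair in the sense of Definition~\ref{Blocked_Matching} can exist. Where you genuinely diverge is the ASM half: the paper merely asserts that ``a similar example can be provided,'' whereas you observe that since $q_c=|\mathcal{W}(\theta_c)|$ no countermeasure is ever over-subscribed, so under full coverage every attack is matched to the top of its preference list and condition (2) of Definition~\ref{Blocked_Matching} is vacuously unsatisfiable --- a shorter and more self-contained argument than porting the deferred-acceptance reasoning. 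Your closing remark on where full coverage is actually used (guaranteeing $M(a)$ is defined for every $a$, and that early termination at $\bar{M}$ would leave unmatched attacks that form blocking pairs with under-subscribed countermeasures) is also more explicit than the paper, which defers that discussion to the partial-coverage remarks after the proposition; both treatments are consistent.
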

\begin{proof}
In order to prove the stability of the two algorithms, the blocking situations defined in Definition \ref{Blocked_Matching} need to be avoided. 
Suppose $(a_1,c_1)$ and $(a_2,c_2)$ are the results of the CSM algorithm. Let us assume $c_2$ prefers $a_1$ over its matching (which is $a_2$). This means $c_2$ must have proposed to $a_1$ before proposing to $a_2$ due to the functionality of the \textit{Gale-Shapley} algorithm. Since $c_2$ proposed to $a_2$ at some point, $a_1$ must have rejected $c_2$. This signifies at the time of rejection, $a_1$ preferred some $c^{\prime}$ over $c_2$. From the output of this matching example, it can be observed that $a_1$ has chosen $c_1$ over the rest of its matching preferences including $c_2$. Thus, $a_1$ would not break up with $c_1$ to match with $c_2$. 
As the proposed algorithm terminates either when all countermeasures are
matched to attacks or every unmatched countermeasure has been rejected
by every acceptable attack. Therefore, the algorithm terminates
after a finite number of steps.
A similar example can be provided for ASM algorithm.
As ASM and CSM algorithms respect Definition \ref{Blocked_Matching}, the proposed algorithms result in a stable matching for full attack coverage. Thus, the proof is completed.
\end{proof}

The partial coverage case may lead to an unstable matching but with higher coverage percentage, this will be significantly reduced. However, this problem can still be addressed, which is discussed in Section \ref{Pareto_Sec}. 

\begin{prop}
    \label{Multi_SM}
    There can be multiple potential matching solutions when $\sum_a N_a>\Bar{M}$
\end{prop}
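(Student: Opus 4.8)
The plan is to read the statement as an existence claim and to establish it constructively: I will show that, once $\sum_a N_a > \bar{M}$, the matching process retains enough freedom to terminate with more than one admissible matching. The essential mechanism is that both SM algorithms (Alg.~\ref{Vul_Alg} and Alg.~\ref{Count_Alg}) halt as soon as the running coverage $\sum_c\sum_a x_{ca} N_a$ reaches $\bar{M}$, rather than matching every attack in $\mathcal{A}$. Under the hypothesis there is strictly positive slack $\sum_a N_a - \bar{M} > 0$, so some attack instances may be left unaddressed at termination and the identity of the covered attacks is not forced.

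First I would recall the role of the feasible solution set $\mathcal{S}$ returned by Alg.~\ref{Feas_Sol}: it collects every countermeasure subset whose aggregate coverage satisfies $\mathbf{C3.2}$. The key step is to exhibit two elements of $\mathcal{S}$ that induce distinct matchings. It suffices to consider an instance with two attacks $a_1,a_2$ and two countermeasures $c_1,c_2$ such that $\omega_{a_1}\in\mathcal{W}(\theta_{c_1})$, $\omega_{a_2}\in\mathcal{W}(\theta_{c_2})$, and $N_{a_1}\geq\bar{M}$, $N_{a_2}\geq\bar{M}$ individually. Then $\sum_a N_a \geq N_{a_1}+N_{a_2} > \bar{M}$, and covering $a_1$ alone through $c_1$ already satisfies $\mathbf{C3.2}$, as does covering $a_2$ alone through $c_2$. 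These two admissible choices correspond to the distinct feasible solutions $\{c_1\}$ and $\{c_2\}$, and passing each through the SM routine yields the matchings $M=\{(a_1,c_1)\}$ and $M'=\{(a_2,c_2)\}$, which differ because they match different attack--countermeasure pairs.

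Second, I would lift this from the specific construction to the generic partial-coverage regime: because the halting rule fixes only the aggregate coverage and not which attacks are covered, any two covering subsets that produce different matched pairs give different matchings. All such candidate matchings are exactly the entries $M_i$ accumulated into $\mathcal{M}$ in Alg.~\ref{Solution} prior to the final $\argmax$ selection, so the existence of two or more of them is precisely the assertion that multiple potential matching solutions exist. For contrast, the equality case $\sum_a N_a = \bar{M}$ forces full coverage and therefore pins down the set of matched attacks, which is why the strict inequality is the operative hypothesis.

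The step I expect to be the main obstacle is guaranteeing that two distinct feasible solutions do not collapse onto the same matching after the tie-breaking and halting rules of the Gale--Shapley procedure are applied. I would circumvent this by always selecting feasible solutions that cover \emph{different} attack sets, as in the construction above, so that their matched pairs necessarily differ and distinctness is immediate; this keeps the argument purely combinatorial and avoids any dependence on the detailed ordering of proposals.
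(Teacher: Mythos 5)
Your proof is correct, and since Proposition~\ref{Multi_SM} is an existence claim (``there can be''), a single constructed instance with two distinct admissible matchings suffices. However, you reach the multiplicity by a genuinely different mechanism than the paper. The paper fixes one instance ($A=10$, $C=3$, $\bar{M}=80\%$) and shows that running the \emph{same} CSM procedure (Alg.~\ref{Count_Alg}) from different starting countermeasures $c_1$, $c_2$, $c_3$ terminates with three different matchings $M_1$, $M_2$, $M_3$, because the stopping rule only constrains the aggregate coverage and not which attacks end up covered; the evidence is given via the preference lists and outcomes in its figures. You instead locate the multiplicity one level up, in the feasible-solution enumeration of Alg.~\ref{Feas_Sol} and Alg.~\ref{Solution}: two countermeasure subsets $\{c_1\}$ and $\{c_2\}$ that each individually meet $\mathbf{C3.2}$ (since $N_{a_1},N_{a_2}\geq\bar{M}$) necessarily induce matchings with disjoint matched pairs. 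Your route buys immunity from the tie-breaking and proposal-order details of Gale--Shapley, which is exactly the obstacle you flag; the paper's route, while less self-contained, is the one that actually motivates its Section~\ref{Pareto_Sec} Pareto-front analysis, where the interesting multiplicity arises from order-dependence \emph{within} a single feasible set rather than across feasible sets. One minor looseness on your side: the remark that equality $\sum_a N_a=\bar{M}$ ``pins down'' the outcome is not needed for the argument and is only true per fixed proposing side; it does not preclude multiple stable matchings of the instance in general.
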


\begin{proof}
\label{R3}
    Due to the nature of partial attack coverage of $\mathbf{C3.2}$ (i.e., the case $\sum_a N_a>\Bar{M}$) and the fact that the SM algorithms can be executed in different orders (i.e., starting the game from a different attack or countermeasure), the outcome can form a Pareto set of solutions.

    Here we provide an example to prove this. Let us assume $A=10$, $C=3$ and $\Bar{M}=80\%$. Let us for simplicity consider there are 10 nodes in the network that have each of these attacks. Fig.~\ref{SMExample1} shows the preference lists of the attacks and countermeasures. Solving the problem with different starting points (starting from $c_1$, $c_2$ and $c_3$) using Alg.~\ref{Count_Alg} we obtain the matching results as shown in Fig.~\ref{SMExample2}. As seen there can be three possible matchings of $M_1$, $M_2$, and $M_3$. It should be noted that Alg.~\ref{Count_Alg} stops when the number of covered attacks across the nodes reaches a minimum of 80\%. As seen, starting the CSM algorithm from different starting points results in different matching solutions. The same applies to the ASM algorithm.   
\end{proof}


\begin{figure}
\centering
\includegraphics[width=0.9\columnwidth]{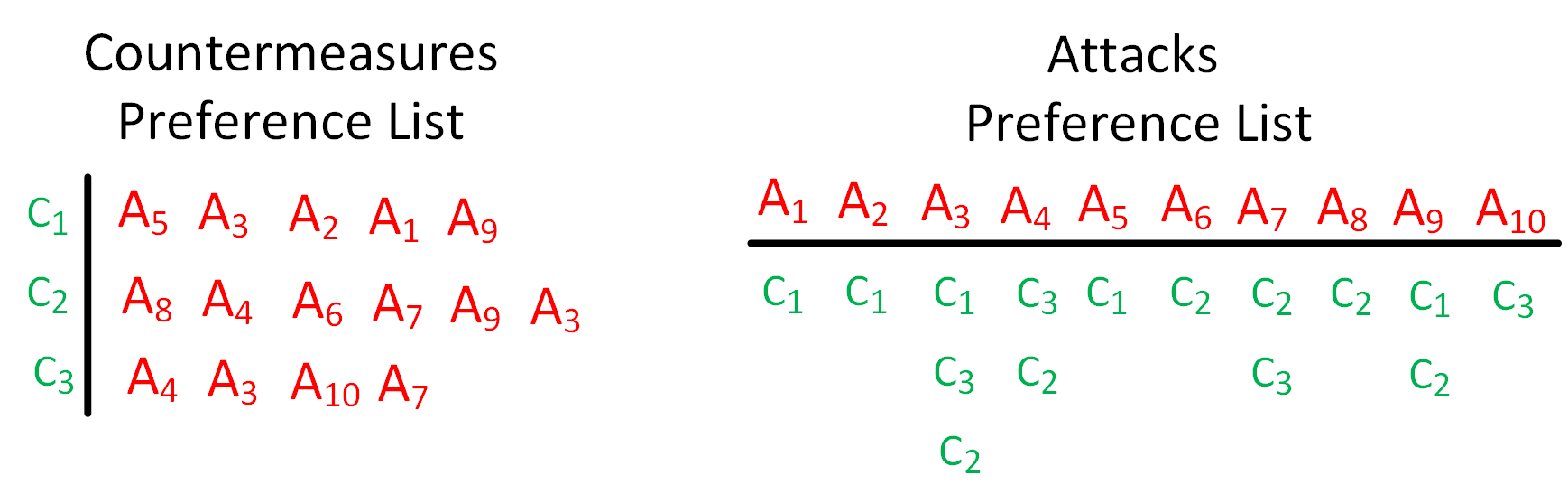}
\caption{Preference Lists of Countermeasures and attacks}
\label{SMExample1}
\end{figure}

\begin{figure}
\centering
\includegraphics[width=0.9\columnwidth]{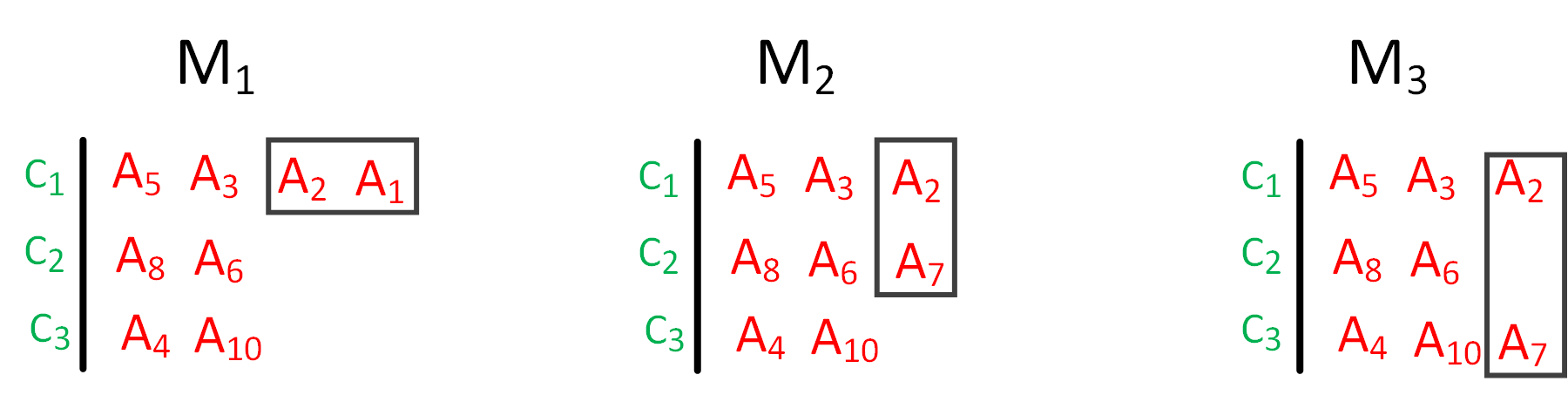}
\caption{SM Solutions}
\label{SMExample2}
\end{figure}

\section{simulation Results}
\label{sec:numericalresult}
We evaluate the performance of the proposed game theoretical-based methods by simulations performed in \textit{MATLAB}. Table~\ref{tab1} summarizes the simulation parameters. 
By analyzing several attack types targeting different networks (IP network, IoT, mobile networks, among others), their impact in terms of security, and the potential countermeasures to mitigate their impact, we noticed the presence of variability in the data. Therefore, in order to reflect this variability of inputs, we consider randomly generated values as given in Table~\ref{tab1} to avoid unrealistic {scenarios}\footnote{Realistic datasets for 5G core-related attacks are being collected in the context of the H2020 SANCUS project and will be shared with the community in the future. } for the evaluation of the proposed solutions. In the following, the performance of the two algorithms is evaluated by the impact of different parameters on the joint utility function, security utility, or QoS cost values. Please note that in the following sub-sections, we differentiate the terms \textit{attack} which is on a specific node, and the \textit{attack types} (e.g., DoS).


\begin{table}[tbp]
\centering
\caption{Simulation Setting}
\label{tab1}
\begin{tabular}{ | m{0.53\columnwidth} |  m{0.31\columnwidth}|} 
  \hline
  \textbf{Parameter} & \textbf{Value}\\
  \hline 
    \# of devices ($N$) &  100  \\
	\hline
  \# of attack types ($A$) &  [20  25  30  35  40]  \\
	\hline
  \# of countermeasure types ($C$) & [4  6  8  10  12]   \\
	\hline
	  Time, Energy, Monetary cost, Security & [0  1]  \\
	\hline
 	  \% of covered attacks ($\Bar{M}$) & [50  60  70  80  90  100]\%  \\
	\hline
 	  Monetary budget ($\xi$) & [4-12]  \\
	\hline
\end{tabular}
\end{table}

\subsection{Impact of $\beta$ on the QoS costs parameters}

Fig.~\ref{Beta} shows the impact of QoS coefficients on the cost of each of the QoS parameters (see Remark~\ref{R2}). This figure is the average result of 1000 simulation runs, where we relax $A=10$, $C=4$, $\bar{M}=90\%$, and $\xi=6$, respectively. As seen, in both Figs. \ref{VulBeta} and \ref{CountBeta} when $\beta_1$, $\beta_2$ and $\beta_3$ are set the highest value, i.e., 0.9, the lowest value of time, energy, and monetary cost, respectively, can be obtained. This is due to the impact of the coefficient in the matching result. For instance, when $\beta_1=0.9$, each attack type prioritizes \textit{time} more than the other QoS parameters for their preference list formation, which results in a more \textit{time-aware} matching solution. The same energy and cost minimization can be observed by setting $\beta_2$ and $\beta_3$ to the highest weight. However, when the coefficients are equal (the last set of bars), the obtained SM solutions have the same time, energy, and monetary cost values too. Finally, the SM solutions obtained from the ASM algorithm have slightly lower (i.e., better) QoS costs than the CSM algorithm. This is because the CSM algorithm prioritizes security more than QoS.

\begin{figure}[!tb]
\centering
\subfloat[ASM Alg.]{\includegraphics[width=\columnwidth]{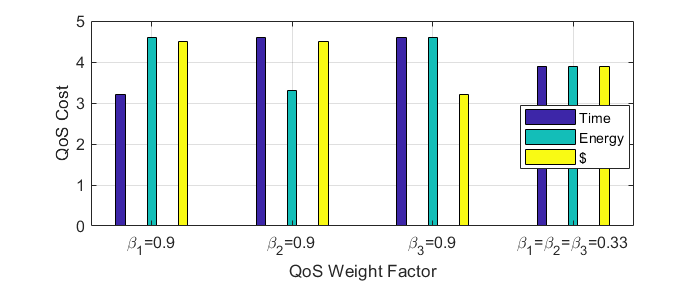}\label{VulBeta}}\\
\subfloat[CSM Alg.]{\includegraphics[width=\columnwidth]{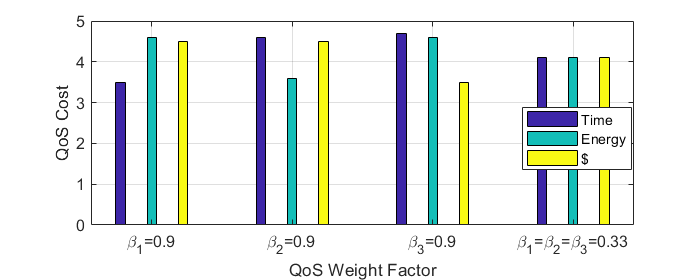}\label{CountBeta}}
\caption{Impact of $\beta$ on the QoS cost of the two algorithms}
\label{Beta}
\end{figure}

\subsection{Impact of monetary budget on the QoS/Monetary Costs and Security Utility}

\begin{figure}
\centering
\subfloat[QoS/Monetary Cost]{\includegraphics[width=0.5\columnwidth]{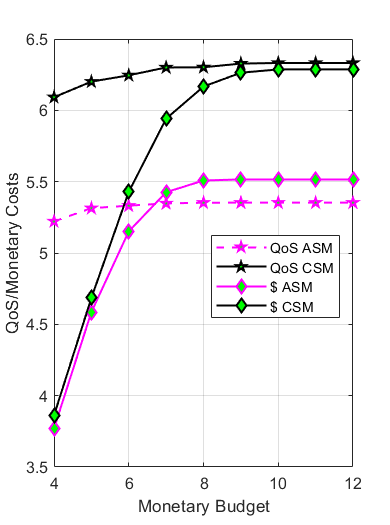}\label{MoneyQoS}}
\subfloat[Security Utility]{\includegraphics[width=0.5\columnwidth]{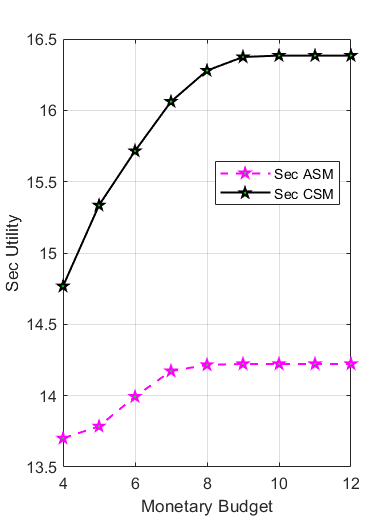}\label{MoneySec}}
\caption{Impact of monetary budget on the joint objective function, QoS, and security of the two algorithms}
\label{Budget}
\end{figure}

Figs.~\ref{MoneyQoS} and \ref{MoneySec} depict the impact of the monetary budget (see Constraint (\ref{c3.1})) on the QoS/Monetary costs and security utility, respectively, of the two ASM and CSM algorithms. This figure is the average result of 200 simulation runs, where we relax $A=20$, $C=10$, $\beta_i=0.33$, and $\bar{M}=100\%$.

The first observation is that the ASM algorithm has lower QoS costs than the CSM algorithm as it prioritizes the QoS for the matching (Figs.~\ref{MoneyQoS}). On the other hand, the CSM Algorithm outperforms the ASM Algorithm in terms of security utility as it prioritizes security for the matching Figs.~\ref{MoneySec}. The impact of monetary costs is directly reflected in both QoS costs and security utility as it restricts the optimization in finding a solution that optimizes the joint objective function in $\mathbf{P3}$. As seen in Fig.~\ref{MoneySec} as the monetary budget increases it brings higher options (larger solution pool) for the $\mathbf{P3}$; thus, higher security utility can also be obtained, however, it also increases the QoS costs, as seen in Fig.~\ref{MoneyQoS}. Therefore, this is a trade-off to be considered. Please note that the QoS costs, as defined in Eq.\eqref{U_V}, include time, energy, and monetary costs where each parameter is
multiplied by a $\beta_*=0.33$. The other observation is that even though the monetary costs are increasing up to 12, the matching algorithms do not select solutions with a cost higher than 8/10 for ASM/CSM algorithm as those solutions do not optimize $\mathbf{P3}$ (due to the higher QoS costs). These figures indicate that in order to cover $100\%$ of the attacks across the nodes while there are 20 different attack types and 10 countermeasures in hand, a monetary cost in the [4  10] range is needed, a {higher monetary cost}\footnote{A lower monetary cost does not allow for finding a feasible solution most of the times; thus, not suitable to consider} is not necessary. This implies if the monetary budget is restricted to 4, the maximum security that can be obtained from the solution is 13.7 and 14.7 for ASM and CSM algorithms.



\subsection{Impact of number of attack types and countermeasure types on the utility and cost values}

Figs. \ref{Ave3D_Vul} and \ref{Ave3D_Count} depict the average joint utility per attack (across the nodes) when impacted by various numbers of attack and countermeasure types. Each of the points in these figures represents the average of 200 simulation experiments where in each experiment random security and QoS values are generated for fairness. In order to focus on the impact of the number of attack and countermeasure types, we relax the $\beta_i=0.33$, $\bar{M}=90\%$, and $\xi=15$. This experiment answers the question "Do we receive a higher joint utility for each attack (across nodes) if there are more countermeasure and attack types in the network?". As seen, increasing the number of countermeasure types  increases the average attack (across the nodes) utility and by the increase in the number of attack types, the average attack (across nodes) utility remains quite stable. In order to better understand the reason we have plotted Fig.~\ref{Vul_Count}.

Fig~\ref{Vul_Count} depicts the impact of the number of attack and countermeasure types on the security and QoS of the solutions obtained by the two algorithms, where the results show the average of 200 simulation experiments. By taking a closer observation on fig~\ref{AveVC_2D_Vul} and \ref{AveVC_2D_Count} we can see that as the number of countermeasure types increases, there will be lower QoS costs per attack. This is because each attack type has wider options to choose from (or be chosen for the CSM algorithm); hence, a higher chance to match to a countermeasure type with lower QoS cost. Increasing the number of countermeasure types also increases the security per attack for the same reason. 


On the other hand, as the number of attack types in the system increases, the average security utility per attack in the system slightly decreases and QoS remains quite stable. This is because there are a fixed number of countermeasure types to address more attack types. However, as seen when the number of countermeasure types is 12, the QoS cost is the least, and security utility is the most in both Figs. \ref{AveVC_2D_Vul} and \ref{AveVC_2D_Count}. Finally, it can be observed that Fig.\ref{AveVC_2D_Count} depicts a higher security value as it is obtained by a CSM algorithm and Fig.\ref{AveVC_2D_Vul} has lower QoS costs as it is obtained by the ASM algorithm.

\begin{figure}[!tb]
\centering
\subfloat[ASM Alg.]{\includegraphics[width=0.85\columnwidth]{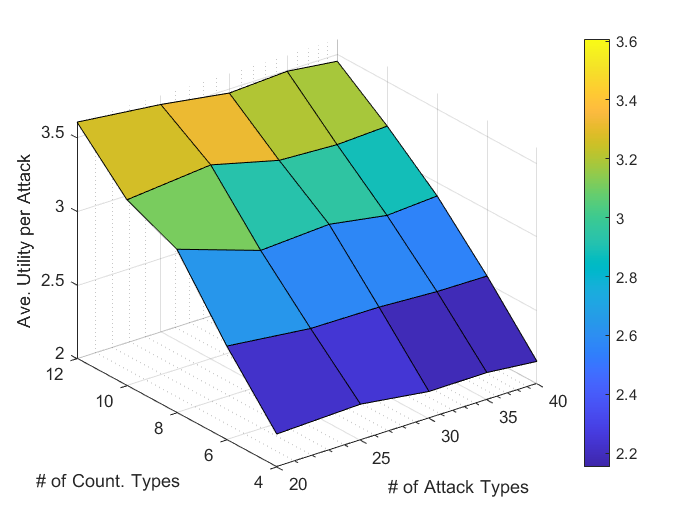}\label{Ave3D_Vul}}\\
\subfloat[CSM Alg.]{\includegraphics[width=0.85\columnwidth]{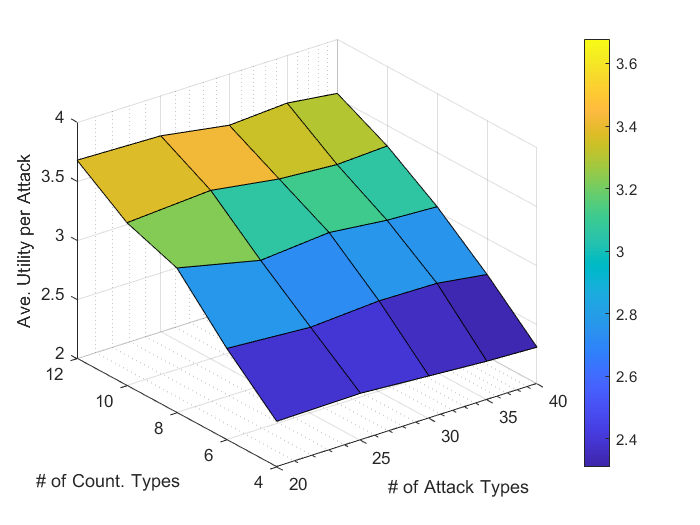}\label{Ave3D_Count}}
\caption{Impact of number of attack and countermeasure types on the utility (Per attack) of the two algorithms}
\label{3D}
\end{figure}

\begin{figure}[!tb]
\centering
\subfloat[ASM Alg.]{\includegraphics[width=0.95\columnwidth]{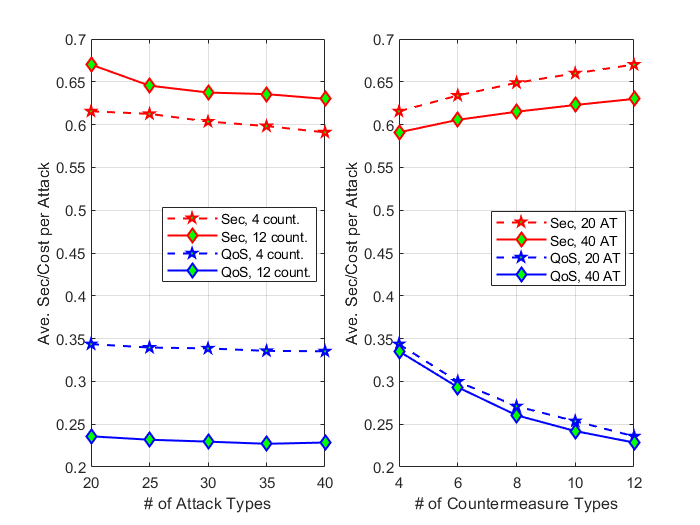}\label{AveVC_2D_Vul}}\\
\subfloat[CSM Alg.]{\includegraphics[width=0.95\columnwidth]{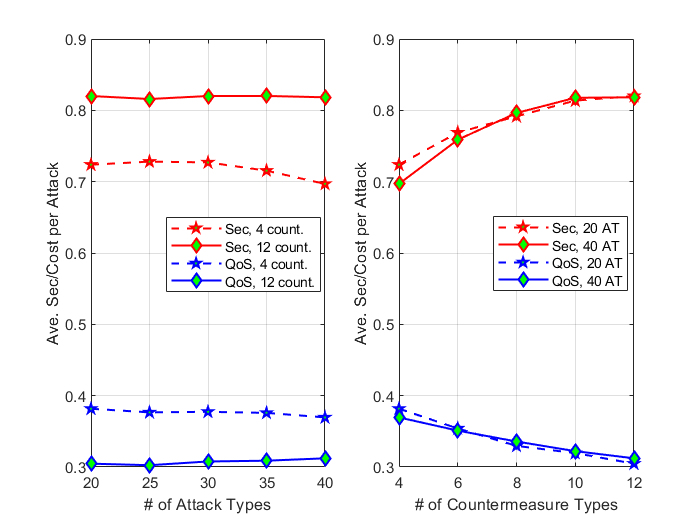}\label{AveVC_2D_Count}}
\caption{Impact of number of attack and countermeasure types on per attack security and QoS of the two algorithms}
\label{Vul_Count}
\end{figure}

\subsection{Impact of the percentage of covered attacks on objective function}

Fig.~\ref{Coverage} depicts the impact of Constraint (\ref{c3.2}) on the joint objective, QoS, and security of the two ASM and CSM algorithms. This figure is the average result of 500 simulation experiments, where we relax $A=10$, $C=4$, $\xi=6$, and $\beta_i=0.33$.

There are mainly two points that can be observed from the figures. First, as the percentage of covered attacks (across the nodes) increases, there is higher QoS costs, higher security utility, and higher joint utility value, which is expected. Second, the ASM algorithm has lower QoS costs due to the priority given to the QoS when performing the matching, and the CSM algorithm has higher security utility due to the priority given to the security objective when performing the matching. In the joint objective figure, however, the CSM algorithm performs better when it considers the ratio of security utility and QoS costs. Remarkably, the trade-off can be clearly observed as the graph in Fig.\ref{CoverageCost} (QoS costs) complements the graph in Fig.\ref{CoverageSec} (security Utility), resulting in a perfect matching. Finally, an SM is guaranteed for the case of $\Bar{M}=100\%$.

\begin{figure}[!tb]
\centering
\subfloat[Joint Objective Utility]{\includegraphics[width=0.95\columnwidth]{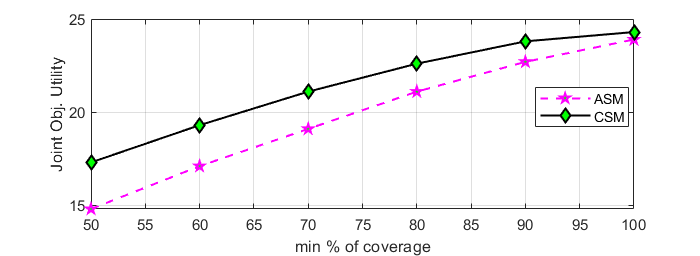}\label{CoverageJoint}}\\
\subfloat[QoS Cost]{\includegraphics[width=0.95\columnwidth]{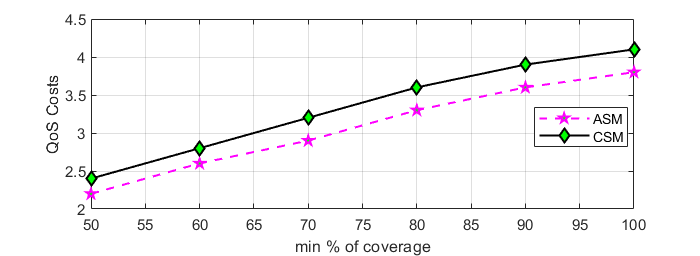}\label{CoverageCost}}\\
\subfloat[Security Utility]{\includegraphics[width=0.95\columnwidth]{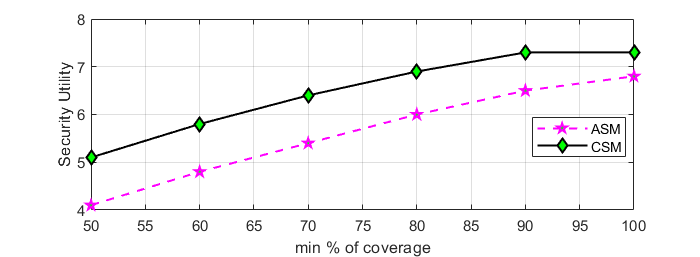}\label{CoverageSec}}
\caption{Impact of \% of covered attacks on the joint objective function, QoS costs, and security utility of the two algorithms}
\label{Coverage}
\end{figure}






\subsection{Pareto Front Solutions}
\label{Pareto_Sec}


\begin{figure*}
\centering
\subfloat[Feasible Solution 1]{\includegraphics[width=0.7\columnwidth,height=6cm]{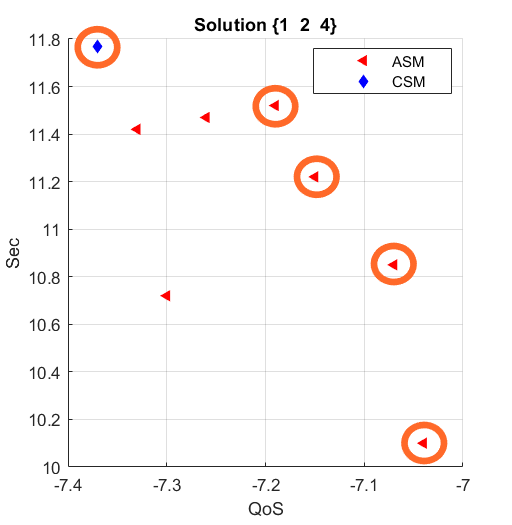}\label{Pareto1}}
\subfloat[Feasible Solution 2]{\includegraphics[width=0.7\columnwidth,height=6cm]{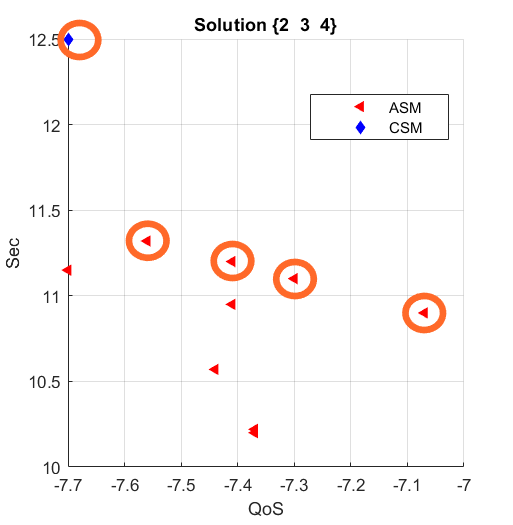}\label{Pareto2}}
\subfloat[Feasible Solution 3]{\includegraphics[width=0.7\columnwidth,height=6cm]{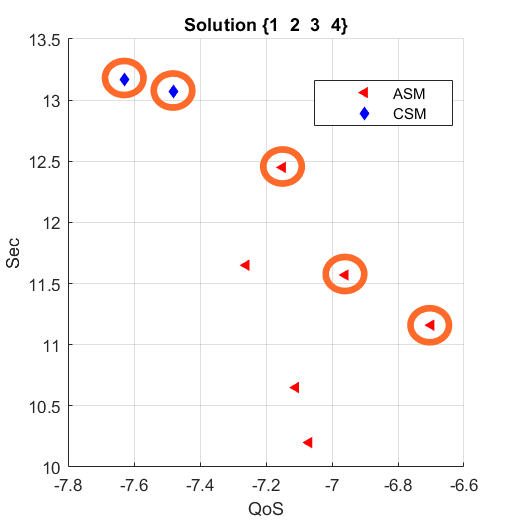}\label{Pareto3}}
\caption{Pareto set of Fronts of the two algorithms for each feasible solution}
\label{Pareto_FeasibleSet}
\end{figure*}


As discussed in  Proposition~\ref{Multi_SM}, for the case of partial attack coverage (across the nodes), the obtained solution might not be stable; hence, the algorithms might produce several solutions each by executing $\mathbf{P3}$ from different starting points. Each of these solutions might result in a better objective than the other (i.e., security or QoS). 
In this sub-section, we perform simulation results for this special case. In such a case, the quality of a solution can be determined by its Pareto-dominance with respect to other solutions \cite{Bozorgchenani_TMC21}. In particular, let $\boldsymbol{\Phi}=\{\Phi_1, \Phi_2, ..., \Phi_Z\}$ be the set of solutions (from the $z$th starting point), where $\Phi_z$ is the $z$th solution, and $Z$ is the total number of generated solutions. Considering two solutions, say $\Phi_1$ and $\Phi_2$, for a given problem with $S$ conflicting objectives, say $O_s$ (for all $s\in[1,S]$), we define Pareto-dominance as follows:

\begin{definition}
\label{Pareto_Def}
\emph{Let $O_s(\Phi)$ be the value of the objective function for the $s$-th objective evaluated at some solution $\Phi$. Then $\Phi_1$ is said to Pareto-dominate $\Phi_2$ (i.e., $\Phi_1\succ\Phi_2$) if $O_s(\Phi_1)\leq O_s(\Phi_2)$ for all $s\in [1,S]$, and there exists some
$p\in[1,S]$ such that $O_p(\Phi_1)<O_p(\Phi_2)$.}
\end{definition}
In our problem, $S=2$ represents the two security and QoS objectives, and $O_s$ represents the value of the objectives as defined in \eqref{U_V} and \eqref{U_C}.
We have set $A=20$, $C=4$, $\xi=7$, and $\Bar{M}=80\%$. Fig.\ref{Pareto_FeasibleSet} depicts three solution sets (for $\mathbf{P3}$) composed of different atomic countermeasures, i.e, \{1,2,4\}, \{2,3,4\} and \{1,2,3,4\} to be taken to address the attacks. These are the only feasible solution sets to cover a minimum of $80\%$ of the attacks (across the nodes) with different objective values, where each of these feasible solution sets can result in different solutions when executing $\mathbf{P3}$ from different starting points.

The red and blue points represent these different solutions (i.e., $\Phi_z$ as defined in Def.\ref{Pareto_Def}) for ASM and CSM algorithms. 
Fig.\ref{Pareto_FeasibleSet} represents only those solutions whose monetary cost does not violate the monetary budget in $\mathbf{C3.1}$. As seen the countermeasure-oriented solutions have the highest security utility and the attack-oriented solutions have the best QoS. Please note that QoS costs are negated in this figure for a better representation of the goodness of the Pareto Fronts. The solutions shown in the orange circle are the (strong) Pareto optimal solutions (any change makes at least one objective worse off), where they offer either lower QoS cost or higher security utility.


\section{Conclusion}
\label{sec:conclusion}
This work studies the countermeasure selection problem as part of an Intrusion Response System (IRS) by considering a trade-off between Security and QoS. The joint problem is formulated considering the constraints on monetary costs and the percentage of covered detected attacks. The problem is transformed into a game-theoretical model and addressed with a Stable Matching solution that considers the utility of two sides of the game, which are the attack and countermeasure types. We first derived the upper bound for the problem and later proposed algorithms to solve the game. Extensive simulation results are carried out to validate the performance of the game-theoretical solutions to see the impact of monetary costs, percentage of covered attacks, number of attack and countermeasure types on the joint utility function. Moreover, the Pareto front solutions are plotted to show the diverse feasible solutions with respect to security and QoS objectives for the special case of non-stable solutions.

In the future, we wish to extend this work by investigating the deployment order of countermeasures. This not only impacts the response effectiveness in terms of risk reduction but also impacts the time model. We also plan to study the execution duration of the selected countermeasures and the network area they are applied as they will impact the system costs/utility too. 



%



\ifCLASSOPTIONcaptionsoff
  \newpage
\fi

\bibliographystyle{IEEEtran}
\bibliography{IEEEabrv,Biblio}
\end{document}